\title{PACE Solver Description:\\Bute-Plus: A Bottom-Up Exact Solver for Treedepth}
\titlerunning{Bute-Plus: A Bottom-Up Exact Solver for Treedepth}
\author{James Trimble}{School of Computing Science, University of Glasgow \\ Glasgow, Scotland, UK }{j.trimble.1@research.gla.ac.uk}{https://orcid.org/0000-0001-7282-8745}{This work was supported by the Engineering and Physical Sciences Research Council (grant number EP/R513222/1).}
\authorrunning{J. Trimble} 
\keywords{Treedepth, Elimination Tree, Graph Algorithms} 
\newcommand{\calS}{\mathcal{S}}
\begin{document}

\maketitle

\begin{abstract}
This note introduces \emph{Bute-Plus}, an exact solver for the treedepth
problem. The core of the solver is a positive-instance driven dynamic program that constructs
an elimination tree of minimum depth in a bottom-up fashion.
Three features greatly improve the algorithm's run time.  The first of
these is a specialised trie data
structure.  The second is a domination rule.
The third is a heuristic presolve
step can quickly find a treedepth decomposition of optimal depth for
many instances.
\end{abstract}

\section{Introduction}\label{sec:introduction}

A treedepth decomposition of graph $G$ is a rooted forest $F$, such that if
$G$ has edge
$\{u,v\}$ then either $u$ is an ancestor of $v$ or $v$ is
an ancestor of $u$ in $F$.  The treedepth problem is to determine, for a given
graph $G$, the minimum depth of a treedepth decomposition of $G$, where depth
is defined as the maximum number of vertices on a root-leaf path.

An \emph{elimination tree} of a connected graph $G$ is a special type of treedepth
decomposition, defined recursively as follows.  If $G$ has
a single vertex, its elimination tree equals $G$.
Otherwise, let $v$ be a vertex in $G$ and let $F$ be
a forest consisting of an elimination tree for each component of $G - v$.
Then an elimination tree of $G$ is formed by making $v$ the parent
of every root of $F$.

For every connected graph $G$, there exists an elimination tree whose
depth equals the treedepth of $G$
(\cite{DBLP:books/daglib/0030491}, chapter 6).  To solve the treedepth problem,
it is therefore sufficient to find an elimination tree of minimum depth.
That is the approach taken by the Bute-Plus solver, which this paper
introduces.
The solver uses a positive-instance driven dynamic programming algorithm,
which seeks sets of vertices that induce low-treedepth subgraphs of the input graph.
Three additional features improve the performance of the algorithm:
a specialised trie data structure, a domination
rule based on a rule by Ganian et al.~\cite{DBLP:conf/alenex/GanianLOS19}, and a heuristic presolver
that quickly finds an optimal solution for many of the PACE Challenge instances.

The author submitted two other exact solvers to the PACE Challenge: Bute
(which is Bute-Plus without the heuristic presolve step) and Bute-Plus-Plus
(which spends additional time on the heuristic presolve and has a minor modification
to the trie data structure).
An earlier algorithm by the author \cite{DBLP:conf/wea/000120a},
which constructs a treedepth decomposition from
the top down,
is very memory-efficient
but is typically much slower than Bute-Plus.

\section{A brief description of the algorithm}\label{sec:algorithm}

This section presents an outline of the Bute-Plus algorithm.
We assume that the vertex set of a graph $G$, denoted $V(G)$, contains only integers.
The neighbourhood of vertex $v$ is denoted by $N(v)$.
For a set of vertices $S$, $N(S)$ denotes the set of vertices
that are not in $S$ but are adjacent to some member of $S$.

The algorithm takes as input a connected graph $G$ and returns the treedepth of $G$
along with a treedepth decomposition of that depth.
The optimisation problem is solved as a sequence of decision problems.  The
solver attempts to find an elimination tree of depth 1, then of depth 2, and so
on until it is successful.  (Typically, the higher-numbered decision problems
are by far the most time consuming.)

For the decision problem of whether an elimination tree of depth $k$ exists,
the algorithm works downwards for $i = k, \dots, 1$, finding all subsets $S$ of $V(G)$
such that (1) $S$ induces a subgraph of $G$ with treedepth no greater than $k-i+1$, (2) the neighbourhood
of $S$ has fewer than $i$ vertices, and (3) the subgraph of $G$ induced by $S$
is connected.  This collection of sets of vertices is called $\calS_i^k$; it includes the vertex set
of every subtree whose root is at depth $i$ of an elimination tree of depth $k$ of $G$.

The algorithm uses a positive-instance driven (PID) \cite{DBLP:journals/jco/Tamaki19} approach to constructing
the $\calS_i^k$: rather than generating all subsets of $V(G)$
and checking if each one satisifies the three required properties, the elements
of $\calS_i^k$ are generated by joining together elements of $\calS_{i+1}^k$.
To be more precise, sets in $\calS_i^k$ are constructed
in two ways; a sketch of these follows.  The first
is simply by choosing vertices with a sufficiently small neighbourhood (since clearly
each of these induces a connected subgraph of treedepth 1).  The second
is by finding a nonempty sub-collection $\calS \subseteq \calS_{i+1}^k$
and a vertex $v$ satisfying the following conditions.
The elements of $\calS$ must be pairwise disjoint, and moreover there must not be an edge
between vertices in any two distinct members of $\calS$.  Furthermore, $v$ must have an edge
to at least one vertex in each member of $\calS$.  These conditions guarantee that
the set $\bigcup{\calS} \cup \{v\}$ induces a connected subgraph of $G$ of that admits an elimination
tree with root $v$ of depth no more than $k-i+1$.

It is easy to verify using the definition of $\calS_i^k$ that an instance of the decision
problem is satisfiable if and only if $\calS_1^k$ is non-empty (in which case $\calS_1^k$
will have $V(G)$ as its only element).  A small amount of extra bookkeeping allows the solver to output an optimal elimination
tree.

Bute-Plus is not the first PID algorithm for treedepth.
Bannach and Berndt \cite{DBLP:conf/wads/BannachB19} present a PID framework
for computing a range of graph parameters including treedepth, treewidth, and pathwidth.
Although their paper describes the family of algorithms in terms of a game theoretic characterisation
of each problem, their algorithm for treedepth has a similar overall approach
to that of Bute-Plus: both algorithms build up sets of vertices by combining
one or more existing sets with a root vertex.
Bannach and Berndt use a queue when combining sets whereas Bute-Plus uses a stack;
a second difference is that the algorithm of Bannach and Berndt is not
restricted to finding only elimination trees.  The framework of Bannach and Berndt
generalises a PID algorithm for treewidth by Tamaki which won the exact treewidth track
of PACE 2016;\footnote{\url{https://github.com/TCS-Meiji/treewidth-exact}}
a second PID algorithm for treewidth by Tamaki
\cite{DBLP:journals/jco/Tamaki19}
performed strongly in PACE 2017.

\section{Improvements to the algorithm}

The Bute-Plus solver has three additional features which greatly reduce run time
on many instances.  Two of these---a trie data structure and a domination rule---are
described in the following two subsections.  The third feature is a heuristic solver
which is run for the first minute with the hope of finding a treedepth decomposition 
of optimal depth; this uses the Tweed-Plus solver which was an entry by the author
in the heuristic
track of PACE 2020 and is described in its own paper in this volume.

\subsection{Trie data structure}

Recall from \Cref{sec:algorithm} that the algorithm generates sets in the collection
$\calS_i^k$
by finding a subset of
$\calS_{i+1}^k$ along with a vertex $v$ that together satisfy certain properties.
For some of the PACE Challenge instances, $\calS_{i+1}^k$ can contain millions of
sets, and the task of finding appropriate subsets of the collection
becomes intractable without a specialised data structure.

Bute-Plus's data structure supports two operations.  The first is to add
a $(S, N(S))$ pair---a set of vertices and its neighbourhood---to the collection.
The second is a query operation which takes a set of vertices $Q$ and an integer
$i$.  This returns all sets $S$ in the collection such that
both (1) $|N(S) \cup N(Q)| < i$ and (2) $(Q \cup N(Q)) \cap S = \emptyset$.

The data structure is implemented as a trie.
When $(S, N(S))$ is inserted, $N(S)$ is sorted in ascending order and
viewed as a string over the alphabet $V(G)$, then added to the trie.  This
approach has been used for the similar problem of superset queries
several times in the past, for example in Savnik's Set-Trie
\cite{DBLP:conf/IEEEares/Savnik13}.  

To sketch the query operation: the algorithm performs a depth-first traversal of
the trie, backtracking when it becomes clear that no value in the subtree is
acceptable.  For efficiency, each node of the trie stores the intersection of
$N(S)$ values in the subtree rooted at that node; this idea is from
a data structure for superset queries posted on Stack Overflow by Ben Tilly
\cite{TrieStackOverflow}.

The task carried out by Bute-Plus's data structure
is similar to the task of the \emph{block sieve} designed by Tamaki for a
PID treewidth solver \cite{DBLP:journals/jco/Tamaki19}.
Although both data structures are based on tries, their designs differ in
several respects; for example, the block sieve data structure comprises
a collection of tries rather than just one.

\subsection{Domination rule}

As discussed in \Cref{sec:introduction}, to find a minimum-depth treedepth
decomposition it is sufficient to restrict attention to elimination trees.
We can speed the algorithm up further by placing additional restrictions
on acceptable elimination trees, if it can be shown that at least one tree
in the restricted class has optimal depth.

For this purpose, the Bute algorithm uses a domination-breaking rule that extends a rule
by Ganian et al.\ \cite{DBLP:conf/alenex/GanianLOS19}.
For distinct vertices $v, w$, we say that $v$ \emph{dominates} $w$
if either of the following two conditions holds: (1)
$N(v) \setminus \{w\} \supset N(w) \setminus \{v\}$;
(2) $N(v) \setminus \{w\} = N(w) \setminus \{v\}$
and $w < v$.  It is always possible to construct an elimination tree of minimum
depth such that no vertex dominates any of its ancestors.

This rule allows us to further restrict each collection $\calS_i^k$ to include
only sets of vertices $S$ such that no vertex in $S$ dominates any member of $N(S)$.

\section{Implementation details}

The Bute-Plus solver is written in C.  Sets of vertices are stored using bitsets;
code from Nauty 2.6r12
\cite{DBLP:journals/jsc/McKayP14} is used for the bitset data structure.\footnote{Nauty is available at \url{http://pallini.di.uniroma1.it/}}
The Tweed-Plus heuristic presolver also uses code from Nauty for
for the random number generator, and uses Metis 5.1.0 \cite{DBLP:journals/siamsc/KarypisK98}
to find nested dissection orderings.\footnote{Metis
is available at \url{http://glaros.dtc.umn.edu/gkhome/metis/metis/overview}}

\bibliography{bib}

\appendix
\section{Proof of correctness for domination rule} \label{appendix:proof}

We now prove the correctness of the domination rule.
Given a graph $G$ and distinct vertices $v, w \in V(G)$
(with no restriction on whether $v$ and $w$ are adjacent),
we say that $v$ \emph{dominates} $w$
in $G$ if either of the following conditions holds:

\begin{itemize}
  \item $N(v) \setminus \{w\} \supset N(w) \setminus \{v\}$; or
  \item $N(v) \setminus \{w\} = N(w) \setminus \{v\}$ and $v > w$.
\end{itemize}

We use the notation $G[S]$ to denote the subgraph of $G$ induced by
vertex-set $S$, and $T[v]$ to denote the subtree of $T$ rooted at vertex
$v$.

\begin{theorem}
Let a connected graph $G$ be given.  There exists an elimination
tree $T$ of $G$ whose depth equals the treedepth of $G$, such
that for every vertex $w \in V(G)$ and every vertex $v \in V(G)$ that
dominates $w$ in $G$ we have that $w$ is not an ancestor of $v$ in $T$.
\end{theorem}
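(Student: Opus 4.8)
The plan is to argue by a local-exchange argument controlled by a potential function, the rough idea being that a vertex that dominates one of its ancestors ``wants'' to move up and can be made to do so without harm. The first ingredient is a small lemma: whenever $v$ dominates $w$ one has $(|N(v)|,v) >_{\mathrm{lex}} (|N(w)|,w)$. In the strict-containment case this is because a proper containment between $N(v)\setminus\{w\}$ and $N(w)\setminus\{v\}$ forces $|N(v)|>|N(w)|$ --- one checks the two sub-cases according to whether or not $v$ and $w$ are adjacent --- and in the equality case it follows directly from the tie-break $v>w$. In particular domination is acyclic, so there is a bijection $g\colon V(G)\to\{1,\dots,n\}$ with the property that $v$ dominating $w$ implies $g(v)>g(w)$ (let $g$ list the vertices in increasing order of the key $(|N(\cdot)|,\cdot)$). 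For an elimination tree $T$ of $G$ put
\[
  \Phi(T)\;=\;\sum_{u\in V(G)} g(u)\cdot\operatorname{depth}_T(u),
\]
where $\operatorname{depth}_T(u)$ is the number of vertices on the root-to-$u$ path in $T$. Among the finitely many elimination trees of $G$ of minimum depth, choose one, $T$, minimising $\Phi$; I claim this $T$ already has the asserted property.

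Suppose it does not, so there are distinct $v,w$ with $v$ dominating $w$ and with $w$ a proper ancestor of $v$ in $T$. The basic move is the \emph{transposition} $T'$: keep the shape of $T$, but let $v$ occupy the node previously occupied by $w$ and $w$ occupy the node previously occupied by $v$. The first thing to check is that $T'$ is still a treedepth decomposition of $G$. The only edges whose endpoints might cease to be in an ancestor/descendant relation are those incident with $v$ or $w$. For an edge $\{w,y\}$ with $y\neq v$ the definition of domination gives $y\in N(v)$, so in $T$ the nodes of $v$ and $y$ were comparable --- which is exactly what is needed after $w$ has moved onto the old node of $v$. For an edge $\{v,y\}$ with $y\neq w$ it suffices that the old node of $w$ is an ancestor of the old node of $v$ and that the ancestors of any node form a chain. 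The shape is unchanged, so $\operatorname{depth}(T')=\operatorname{depth}(T)=\operatorname{td}(G)$. Finally, under the transposition only $v$ and $w$ change depth and they simply swap depths, so, writing $d_v>d_w$ for their depths in $T$,
\[
  \Phi(T')-\Phi(T)\;=\;\bigl(g(w)-g(v)\bigr)\,(d_v-d_w)\;<\;0 ,
\]
since $g(w)<g(v)$.

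The main obstacle is that $T'$ need no longer be an elimination tree: on the path from $w$ down to $v$ a subtree that used to induce a connected subgraph may now fail to, because $w$, having moved lower, need not be adjacent to every component that $v$ reached. The repair is to lift the stray pieces back up. Pick a node $q$ of the current tree whose subtree induces a disconnected subgraph and for which $V(T'[q])$ is inclusion-minimal with this property, and let the components of $G[V(T'[q])]$ be $C_1\ni q,\,C_2,\dots,C_m$. For each $i\ge 2$ the component $C_i$ has a unique topmost node $z_i$ inside $T'[q]$; inclusion-minimality of $q$ forces $V(T'[z_i])=C_i$, and every neighbour of a vertex of $C_i$ that lies outside $C_i$ is in fact a proper ancestor of $q$. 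Hence one may detach the subtree $T'[z_i]$ and reattach $z_i$ as a child of the parent of $q$: this keeps the tree a treedepth decomposition of $G$, does not increase its depth (the lifted subtree $T'[z_i]$ is strictly shallower than $T'[q]$, and removing it from $T'[q]$ cannot deepen the latter), and strictly decreases $\Phi$ (the nonempty set $C_i$ moves to strictly smaller depths). Iterating the lift terminates because $\Phi$ is a nonnegative integer that strictly decreases, and it stops exactly when every subtree induces a connected subgraph --- that is, at an elimination tree of $G$. This elimination tree has depth at most $\operatorname{depth}(T)=\operatorname{td}(G)$, hence equal to $\operatorname{td}(G)$, and potential strictly below $\Phi(T)$, contradicting the choice of $T$. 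Therefore no vertex of $T$ dominates one of its own ancestors, which is the statement.

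I expect the bookkeeping in the last paragraph to be the delicate part: proving that, after the transposition, the stray components can always be reattached higher up without violating the ancestor/descendant condition --- this is where inclusion-minimality of $q$ is used, and it rests on the standard fact that the vertices of a connected subgraph have a unique topmost node in any treedepth decomposition --- and dealing with the degenerate configurations (for instance $w$ being the parent of $v$, or $w$ being the root of $T$, so that the transposition and the lift take place at the root). By comparison the degree-order lemma for domination and the check that the transposition preserves the treedepth-decomposition property are both short.
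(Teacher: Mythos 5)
Your proof is correct, and it takes a genuinely different route from the paper's. The paper also uses an exchange argument, but its measure is a lexicographic score $(d,u)$ attached to the shallowest vertex that is dominated by a descendant, and its single move is wholesale: it re-roots the entire subtree $T[u]$ at the dominating descendant $w$, justified by the observation that $G[V(T[u])\setminus\{w\}]$ is isomorphic (via the map sending $u$ to $w$) to a subgraph of $G[V(T[u])\setminus\{u\}]$, so a new elimination tree on the same vertex set rooted at $w$ exists with no increase in height; iterating increases the score, which takes only finitely many values. You instead make the acyclicity of domination explicit through the degree-order lemma (which the paper only handles implicitly by assuming a degree-sorted vertex numbering), encode it in a linear extension $g$, and minimise the weighted-depth potential $\Phi$ over minimum-depth elimination trees; your move is a local swap of $v$ and $w$ followed by a repair phase that lifts disconnected components, with both depth and $\Phi$ monotone throughout. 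What the paper's approach buys is brevity: re-rooting at $w$ immediately yields an elimination tree of the subtree's vertex set, so no connectivity repair is needed, at the price of the embedding trick and a somewhat glossed check that the score really increases. What your approach buys is elementarity: every step is a local tree surgery whose validity reduces to the standard facts that ancestors form a chain and that a connected subgraph has a unique topmost vertex in a treedepth decomposition; the delicate part you flag (that neighbours of a lifted component $C_i$ outside $C_i$ must be proper ancestors of $q$, and that inclusion-minimality of $q$ gives $V(T'[z_i])=C_i$) does go through exactly as you sketch, and the degenerate cases are harmless ($q$ can never be the root since the root's subtree induces the connected graph $G$, and the swap is unproblematic when $w$ is the root or the parent of $v$; the edge $\{v,w\}$ itself, if present, trivially keeps comparable endpoints under the swap).
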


\begin{proof}
For ease
of exposition, we assume that the vertices of $G$ are numbered in nondecreasing
order of degree
(i.e. $v < w \implies |N(v)| \leq |N(w)|$), but the proof can easily be generalised
by defining an appropriate ordering relation on the vertices.

For an elimination tree $T$ of $G$, we define the \emph{score}
function $s_T : V(G) \mapsto \mathbb N \times V(G)$ that maps each
vertex $v$ to the tuple $(d,v)$ where $d$ is the depth of $v$ in $T$.
We compare scores lexicographically; thus, $v$ has a higher score than $v'$
if $v$ appears deeper in the tree than $v'$ or if the two vertices are at
the same depth and $v > v'$.  We also define the score of a tree: the score
of $T$ equals the minimum $s_T(v)$ over all vertices $v$ that are dominated
by one of their descendants.  If no such $v$ exists, the score of $T$
is the special value $(\infty, \infty)$.

Let a minimum-depth elimination tree $T$ of $G$ that breaks the domination
rule be given; that
is, there exist $v,v' \in V(G)$ such that $v$ dominates $v'$ in $G$ and
$v'$ is an ancestor of $v$ in $T$.
We will demonstrate that it is possible to reorder a subtree of $T$ to obtain
a new minimum-depth elimination tree of strictly greater score than $T$.  Repeated
application of this rule must yield a minimum-depth elimination tree that
does not break the domination rule, since there are only finitely many
different scores that elimination trees of a finite graph can have.

Let $(d,u)$ be the score of $T$.
Let $w$ be the greatest-numbered vertex in $T[u]$ that dominates $u$.
The subtree $T[u]$ may be replaced with an elimination tree that has the
same vertex set as $T[u]$ but is rooted at $w$, without increasing the height of the
subtree (since $G[V(T[u]) \setminus \{w\}]$
is isomorphic to a subgraph of $G[V(T[u]) \setminus \{u\}]$).
This replacement results in new minimum-depth elimination tree $T'$ of
$G$. The score of $T'$ is at least $(d, w)$, which is greater than $(d,u)$.
\end{proof}

\end{document}